\newcommand*{\restartrowcolors}{%
  \ifhmode\unskip\fi
  \vadjust{%
    \global\rownum=0 %
  }%
}
\lstdefinestyle{customcpp}{%
  belowcaptionskip=1\baselineskip,
  breaklines=true,
  xleftmargin=\parindent,
  language=C++,
  showstringspaces=false,
  basicstyle=\linespread{0.4}\footnotesize\ttfamily,
  keywordstyle=\bfseries\color{green!40!black},
  numberstyle=\tiny,
  commentstyle=\itshape\color{purple!40!black},
  identifierstyle=\bfseries\color{black},
  stringstyle=\color{red},
  emph={int,char,double,float,unsigned},
  emphstyle=\color{blue},
  morekeywords={uint64_t,uint32_t,__m256i,__m128i,UINT64_C},
}
\lstdefinestyle{custompython}{%
  belowcaptionskip=1\baselineskip,
  breaklines=true,
  xleftmargin=\parindent,
  language=Python,
  showstringspaces=false,
basicstyle=\linespread{0.4}\footnotesize\ttfamily,
  keywordstyle=\bfseries\color{green!40!black},
  numberstyle=\tiny,
  commentstyle=\itshape\color{purple!40!black},
  identifierstyle=\bfseries\color{black},
  stringstyle=\color{red},
  emph={int,char,double,float,unsigned},
  emphstyle=\color{blue},
  morekeywords={uint64_t,uint32_t,__m256i,__m128i,UINT64_C},
}
\definecolor{light-gray}{gray}{0.95}
\definecolor{bblue}{HTML}{4F81BD}
\definecolor{rred}{HTML}{C0504D}
\definecolor{ggreen}{HTML}{9BBB59}
\definecolor{ppurple}{HTML}{9F4C7C}
\definecolor{ggreen}{HTML}{00FF00}
\newcommand{\tuple}[2]{(#1_1, #1_2, \ldots, #1_{#2})}  
\title{Batched Ranged Random Integer Generation}
\author[1]{Nevin Brackett-Rozinsky}
\author[2]{Daniel Lemire}
\affil[1]{Limerick, Maine, USA

nevin.brackettrozinsky@gmail.com}
\affil[2]{Data Science Research Center, Universit\'e du Qu\'ebec (TELUQ), Montreal, Quebec, H2S 3L5, Canada}
\runningauthor{Nevin Brackett-Rozinsky and Daniel Lemire}
\pgfplotsset{compat=1.18}
\begin{document}

\maketitle
\begin{abstract}
Pseudorandom values are often generated as 64-bit binary words. These random words need to be converted
into ranged values without statistical bias. We present an efficient algorithm to generate multiple independent
uniformly-random bounded integers from a single uniformly-random binary word, without any bias. 
In the common case, our method uses one multiplication and no division operations per value produced.
In practice, our algorithm can more than double the speed of unbiased random shuffling for small to moderately large arrays.
\keywords{Random number generation, Rejection method, Randomized algorithms}
\end{abstract}

\section{Introduction}
\label{sec:introduction}

Random numbers are useful in many computer programs. Most programming languages provide a method to generate uniformly random or pseudorandom integers in the range $[0, 2^L)$ for some $L$, commonly 64. We refer to such numbers as $L$-bit random words, and the functions which produce them as \emph{random number generators}. Pseudorandom generators execute algorithms to produce a sequence of numbers that approximate the properties of random numbers, starting from a given \emph{seed} (typically a few bytes).
There are a variety of efficient techniques to generate high-quality 
pseudorandom binary words, such as the Mersenne Twister~\cite{Matsumoto:1998:MTE:272991.272995},
linear congruential generators~\cite{l1999tables,LEcuyer:1993:SGM:169702.169698,de1988parallelization,fishman2013monte,LEcuyer:1990:RNS:84537.84555},
PCG~\cite{o2014pcg}, and so forth~\cite{l2017random,l2012random}. Although they do not produce truly random outputs, many pseudorandom generators can pass rigorous statistical tests, and are considered to be random in practice~\cite{sleem2020testu01}.

Random binary words are readily available, yet applications often require uniformly random integers from other bounded ranges, such as $[0, b)$ for some $b$. We refer to these numbers as bounded random integers, or \emph{dice rolls}, and we focus on the case where $0 < b \leq 2^L$. We are particularly interested in applications which require multiple dice rolls, and where $b$ may be chosen dynamically rather than known at compile time, such as when shuffling arrays
or selecting random samples~\cite{Vitter:1985:RSR:3147.3165}. For example, we consider the Fisher-Yates random shuffle described by Knuth~\cite{Knuth1969} and by Durstenfeld~\cite{durstenfeld1964algorithm}, which we restate here in \autoref{alg:knuthshuffle}.

\begin{algorithm}
\caption{\centering\strut---
Fisher-Yates random shuffle}
\label{alg:knuthshuffle}
\begin{algorithmic}[1]

\Require Source of uniformly random integers in bounded ranges
\Require Array $A$ made of $n$ elements indexed from $0$ to $n{-}1$
\Ensure All $n!$ permutations of $A$ are equiprobable
\smallskip

\For{$i = n{-}1, \ldots, 1$}
  \State  $j \gets $ random integer in $[0,i]$
  \State exchange $A[i]$ and $A[j]$
\EndFor

\smallskip
\end{algorithmic}
\end{algorithm}

Algorithms such as random shuffling are commonplace in simulations and other important applications~\cite{Devroye:1997:RVG:268403.268413,Calvin:1998:UPR:280265.280273,Owen:1998:LSS:272991.273010,Osogami:2009:FPB:1540530.1540533,Amrein:2011:VIS:1899396.1899401,Hernandez:2012:CNO:2379810.2379813,Hinrichs:2013:SUP:2999611.2999711}. Accordingly, shuffling algorithms are provided in the standard libraries for most programming languages, and there has also been much work done on  parallelizing random permutations~\cite{Shterev2010,Langr:2014:A9P:2684421.2669372,SANDERS1998305,Gustedt2008,Waechter2012}. In the present work we demonstrate an efficient method to roll dice in batches, which can improve the performance of shuffles and similar algorithms by reducing the number of calls to a random number generator. Previous batched dice rolling algorithms were based on division, whereas ours is based on multiplication which can be faster.

Perhaps surprisingly, the computational cost of converting binary words into ranged integers can be critical to good performance. For
example, Lemire~\cite{lemire2019} showed that by simplifying the conversion to avoid most division operations, the practical performance of a random shuffle could be made up to eight times faster. We aim to show that the performance can be further multiplied in some cases, through the use of batched dice rolls.

Our main theoretical result is in \autoref{sec:main_theorem}, where we present and prove the correctness of an algorithm to generate multiple independent bounded random integers from a single random binary word, using in the common case only one multiplication and zero division operations per die roll. The method is based on an existing algorithm due to Lemire that generates one such number at a time, which we summarize in \autoref{sec:existing_algorithms}. Our proof uses mixed-radix notation as described in \autoref{sec:mixed_radix}.

In \autoref{sec:implementation} we demonstrate how our theoretical result can be implemented as a practical algorithm, in \autoref{sec:shuffling_arrays} we apply it to the task of shuffling an array, and in \autoref{sec:experiments} we show the results of our experiments, with timing measurements illustrating the speed of array shuffling with and without batched dice rolls. Finally in \autoref{sec:conclusion} we summarize our results.

\subsection{Mathematical notation}

We only consider non-negative integers. We use ``$\otimes$'' to denote full-width multiplication: $a \otimes b = (x, y)$ means $x$ and $y$ are integers such that $2^L x + y = ab$, and $0 \leq y < 2^L$. On x64 systems, a full-width multiplication requires only a single instruction, while ARM systems provide full-width multiplication with two instructions.

We use ``$\div$'' to denote integer division: $a\div b = \lfloor a/b \rfloor$ is the greatest integer less than or equal to $a/b$. We use ``$\bmod$'' to denote the Euclidean remainder: $(a \bmod b) = a - b \cdot (a \div b)$. Because we use only non-negative integers, we have $0 \leq (a \bmod b) < b$. Note that division and remainder instructions are often slow in practice, and may have a latency of 18~cycles compared to merely 3~cycles for a multiplication on a recent Intel processor (e.g. Ice~Lake)~\cite{abel2019uops}.

We use pi notation ``$\prod$'' to denote products, and we  omit the bounds when they can be inferred from context. Thus if $b_i$ is defined for each $i$ from 1 through $k$, then $\prod b_i = b_1 b_2 \cdots b_k$. We also use sigma notation ``$\sum$'' to denote sums, and an underlined superscript to denote the falling factorial: $n^{\underline k} = n! / (n{-}k)! = n(n-1)\cdots(n-(k{-}1))$.

\section{Existing algorithms}
\label{sec:existing_algorithms}

There are many ways to generate bounded random integers from random bits, and it is nontrivial to do so efficiently.
Several widely-used algorithms are described by Lemire~\cite{lemire2019}, including a then-novel strategy to avoid expensive division operations. This method has now been adopted by several major systems: GNU libstdc++, Microsoft standard C++ library, the Linux kernel, and the standard libraries of the Go, Swift, Julia, C\# and Zig languages.

It works as follows. Let $r$ be a uniformly random integer in $[0, 2^L)$, and $[0, b)$ the desired target range with $0 < b \leq 2^L$. Perform the full-width multiplication $b \otimes r = (x, y)$. Now if $y \geq (2^L \bmod b)$---which we call \emph{Lemire’s criterion}---then $x$ is a uniformly random integer in the range $[0, b)$. Otherwise, try again with a new $r$. That is, apply the rejection method~\cite{von1961various}. A detailed proof of the correctness of Lemire's method can be found in \cite{lemire2019}. We provide a simplified sketch of it here.
\begin{lemma}
\label{lem:lemire}
Lemire's method produces uniformly random integers in the range $[0, b)$.
\end{lemma}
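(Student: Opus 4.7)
The plan is to count, for each candidate output value $k \in [0, b)$, how many inputs $r \in [0, 2^L)$ cause the algorithm to accept and emit $k$, and to show that this count is independent of $k$. Because the rejection method resamples $r$ upon failure until acceptance, equal acceptance counts across outputs translate directly into a uniform distribution on $[0, b)$~\cite{von1961various}, so the combinatorial claim is all that is really required.

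First I would unpack the full-width multiplication. Writing $br = 2^L x + y$ with $0 \le y < 2^L$, the algorithm outputs $x = k$ precisely when $br = 2^L k + y$ for some $y$ in the acceptance region $[m, 2^L)$, where $m = 2^L \bmod b$. For fixed $k$ and $y$ in these ranges, the equation $br = 2^L k + y$ determines $r$ uniquely; since $0 \le 2^L k + y < 2^L b$, the solution $r = (2^L k + y)/b$ lies in $[0, 2^L)$ whenever it is an integer. Consequently, the count of accepted $r$ producing output $k$ equals the number of $y \in [m, 2^L)$ satisfying $b \mid (2^L k + y)$, equivalently $y \equiv -2^L k \pmod b$.

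The key identity is that the acceptance region has length $2^L - m = b \lfloor 2^L/b \rfloor$, an exact multiple of $b$. In any interval of length $b q$, each residue class modulo $b$ contributes exactly $q$ elements, so for every $k \in [0, b)$ the accepted count equals $\lfloor 2^L/b \rfloor$, independent of $k$. This yields the desired uniformity once it is combined with the standard rejection-method argument.

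The main obstacle is not analytical but bookkeeping: one must juggle the roles of $x$, $y$, and $r$ without conflating them, and in particular avoid tacitly assuming that the map $r \mapsto (x, y)$ is symmetric under permutations of outputs (it need not be when $\gcd(b, 2^L) \ne 1$). The cleanest resolution is to parameterize acceptance by the pair $(k, y)$ rather than by $r$, which turns the problem into counting residues in a length-$b \lfloor 2^L/b \rfloor$ interval and reduces everything else to routine verification that the association $(k, y) \leftrightarrow r$ is a bijection on its domain.
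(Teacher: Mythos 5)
Your proof is correct and rests on the same key fact as the paper's: the acceptance region has length $2^L - (2^L \bmod b) = b\lfloor 2^L/b\rfloor$, a multiple of $b$, so each output value corresponds to exactly $\lfloor 2^L/b\rfloor$ accepted inputs. Your count of residues $y \equiv -2^L k \pmod{b}$ in $[m, 2^L)$ is just a reparameterization of the paper's count of multiples of $b$ in each interval $[2^L x + (2^L \bmod b),\, 2^L(x+1))$, so the two arguments are essentially identical, with yours spelling out the bijection in more detail.
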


\begin{proof}
Every list of $2^L - (2^L \bmod b)$ consecutive integers contains the same number of multiples of $b$, because the length of that list is divisible by $b$. Thus $br$, which is a uniformly random multiple of $b$ in $[0, 2^L b)$, is equally likely to land in each of the intervals $[2^L x + (2^L \bmod b), 2^L (x+1))$, where $x$ is an integer in  $[0, b)$, because they all have that same length. Lemire's criterion accepts the value $x$ when $br$ is in such an interval, and rejects it otherwise, which ensures that the results are uniformly random.
\end{proof}

In the common case, Lemire’s method uses one $L$-bit random word and one multiplication per die roll. 
Sometimes it requires computing the remainder $(2^L \bmod b)$, which may involve a division instruction. However, the division can often be avoided. Because $(2^L \bmod b) < b$, it is only necessary to compute  $(2^L \bmod b)$ when $y < b$. If $b$ is much smaller than $2^L$ the division is often avoided, and when required it is computed  at most once per die roll. As a practical matter, we note that when using $L$-bit unsigned integers, the value $(2^L \bmod b)$ can be computed in C and similar languages via the expression $-b \mathbin{\%} b$. This works because $-b$ evaluates to $2^L - b$, which is congruent to $2^L \bmod b$.

\subsection{Batched dice rolls}
\label{sec:batched_dice_rolls}

One widely-known way to generate bounded random numbers in batches is that, to simulate rolling dice with~$b_1$ and $b_2$~sides, a single die with $b_1 b_2$ sides is rolled instead. The resulting number is uniformly random in $[0, b_1 b_2)$, and the desired dice rolls can be obtained by taking its remainder and quotient upon dividing by $b_1$. This is equivalent to choosing a random square on a $b_1 \times b_2$ rectangular grid by numbering the squares sequentially, then taking its row and column as the results.

These values are independent and uniformly random integers in $[0, b_1)$ and $[0, b_2)$ respectively, and the method generalizes to more dice by taking successive remainders of quotients upon dividing by each $b_i$. This corresponds to choosing a random cell in a higher-dimensional rectangular lattice by numbering the cells sequentially, and using its coordinates as the results. Or in a context familiar to many programmers, it is equivalent to computing the indices of an element in a multi-dimensional array, at a random position in the underlying linear storage. This strategy uses fewer random bits than rolling each die separately, however its performance is limited by division operations which are slower in practice than multiplications on commodity processors.

When applied to the task of shuffling an array, this approach gives rise to the factorial number system described by Laisant~\cite{Laisant1888} in 1888, and to the Lehmer code~\cite{LehmerCode1960} for numbering permutations. Specifically, dice of sizes $n, n{-}1, \ldots, 2$ can be rolled in a batch by starting with a single $n!$-sided die roll, and extracting the required values through division and remainder. The resulting dice rolls may then be used for a Fisher-Yates shuffle. If $n!$ is too large for practical purposes, several smaller batches may be rolled instead. We are not aware of any widespread use of this batched shuffling method, which we attribute to its reliance on relatively slow division operations.

Our approach to rolling a batch of dice works somewhat differently. Rather than extract the values via division, we instead build them up through multiplication. Specifically, we extend Lemire’s method in order to generate multiple bounded random integers from a single random word, with zero division operations in the common case.

\section{Mixed-radix numbers}
\label{sec:mixed_radix}

We use mixed-radix notation in the proof of our main result. Mixed-radix notation is a positional number system in which each digit position has its own base. This contrasts with decimal notation where every digit is in base 10, or binary where every digit is in base 2. Mixed-radix numbers are allowed, but not required, to have a different base for each digit. Each base is a positive integer, and each digit is a non-negative integer smaller than its base.

We denote both the bases and digits of a mixed-radix number with ordered tuples beginning with the most-significant digit, and use context to distinguish them. For example, a 2-digit mixed-radix number in base $(b_1, b_2)$ whose digits are $(a_1, a_2)$ represents the value $a = a_1  b_2 + a_2$. A $k$-digit mixed-radix number in base $\tuple{b}{k}$ can represent all the integers from 0 through $b{-}1$, where $b = b_1b_2 \cdots b_k$, and this representation is unique. If its digits are $\tuple{a}{k}$ then it represents the value:
\begin{align*}
a = \sum_{i=1}^k \left(a_i \prod_{j>i}^k b_j \right)
= a_1(b_2b_3 \cdots b_k) + a_2(b_3b_4 \cdots b_k) + \cdots + a_{k-1}b_k + a_k
\end{align*}
Given a non-negative integer $a < \prod b_i$, its mixed-radix representation can be obtained by taking the remainders of successive quotients when dividing by $b_k, b_{k-1}, \ldots, b_1$, meaning all the $b_i$ in reverse order.

Mixed-radix notation gives us a concise way to describe the ideas we discussed in \autoref{sec:batched_dice_rolls}. If the digits of a mixed-radix number are interpreted as coordinates in a $k$-dimensional rectangular lattice, then its value matches the label of that cell when they are numbered sequentially. Or if its digits are used as indices in a $k$-dimensional C-style array, then its value gives the corresponding position in the underlying linear storage. We know that a uniformly random cell in a rectangular lattice has independent and uniformly random coordinates, and this translates directly to mixed-radix numbers as seen in \autoref{lem:independence}.
\begin{lemma}\label{lem:independence}
If a uniformly random integer $a$ in $[0, b)$ is written as a mixed-radix number in base $\tuple{b}{k}$, where $b = \prod b_i$, then its digits $\tuple{a}{k}$ are independent and uniformly random integers in the ranges $0 \leq a_i < b_i$.
\end{lemma}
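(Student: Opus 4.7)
The plan is to exploit the bijection between integers in $[0, b)$ and digit tuples in the Cartesian product $\prod_{i=1}^k \{0, 1, \ldots, b_i - 1\}$, which is exactly the uniqueness-of-representation statement recalled earlier in \autoref{sec:mixed_radix}. Since the mapping from $a$ to its digit tuple $\tuple{a}{k}$ is deterministic and bijective, pushing the uniform distribution on $[0, b)$ through it yields a distribution on tuples that assigns probability $1/b = 1/\prod b_i$ to every tuple.

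The key step is then to observe that this joint probability mass function factors as a product of the uniform marginals: $1/\prod b_i = \prod (1/b_i)$. Factorization of the joint PMF into the product of the claimed marginals is, by definition, simultaneously the statement that each $a_i$ is uniform on $\{0, \ldots, b_i - 1\}$ and that the $a_i$ are mutually independent. That finishes the proof.

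An alternative I considered is induction on $k$. The base case $k=1$ is immediate, and for the inductive step one would write $a = a_1 \cdot (b_2 b_3 \cdots b_k) + a'$ with $a_1 = a \div (b_2 \cdots b_k)$ and $a' = a \bmod (b_2 \cdots b_k)$, note that $a_1$ is uniform on $[0, b_1)$ while $a'$ conditional on $a_1$ sweeps uniformly over $[0, b_2 \cdots b_k)$ (so $a_1$ and $a'$ are independent), and then apply the inductive hypothesis to $a'$ in base $\tuple{b}{k}$ with the first base dropped. However, this adds bookkeeping without any new content, so I would prefer the direct counting argument.

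I do not expect a real obstacle: the lemma is essentially the fact that a uniform distribution on a finite Cartesian product equals the product of its uniform marginals, transported along the mixed-radix bijection. The one point worth being explicit about is where the hypothesis $b = \prod b_i$ is used --- it is precisely what makes the bijection total, so that every tuple has the same probability $1/\prod b_i$ rather than some tuples being unreachable.
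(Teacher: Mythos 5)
Your proposal is correct and is essentially the paper's own argument, just spelled out in more detail: the paper's one-line proof likewise observes that every digit tuple is equally likely (via the bijection) and concludes uniformity and independence from that. The extra explicitness about the factorization $1/\prod b_i = \prod(1/b_i)$ and the role of the hypothesis $b = \prod b_i$ is a welcome clarification but not a different route.
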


\begin{proof}
Every possible sequence of digits is equally likely, so each $a_i$ takes all integer values in $[0, b_i)$ with uniform probability regardless of the values of the other digits.
\end{proof}

\section{Main result}
\label{sec:main_theorem}

We are now ready to present our main result, which is an efficient method for rolling a batch of dice. We start with a random word, and through repeated multiplications build up the digits of a mixed-radix number. When appropriate conditions are met for that number to be uniformly random, then its digits are uniform and independent.
\begin{theorem}\label{thm:main_theorem}
Let $r_0$ be an $L$-bit random word, meaning $r_0$ is a uniformly random integer in $[0, 2^L)$. Let $\tuple{b}{k}$ be positive integers with $b = \prod b_i \leq 2^L$. Starting with $r_0$ and $b_1$, for each $i$ from 1 to $k$, perform the full-width multiplication $b_i \otimes r_{i-1}$ and set $a_i$ to the most significant $L$~bits and $r_i$~to the least significant $L$~bits of the $2L$-bit result:
\begin{itemize}[label={}]
    \item $(a_1, r_1) \gets b_1 \otimes r_0$
    \item $(a_2, r_2) \gets b_2 \otimes r_1$
    \item $\phantom{(a_1, r_1)} \vdotswithin{\gets}$
    \item $(a_k, r_k) \gets b_k \otimes r_{k-1}$
\end{itemize}
At the end of this process, if $r_k \geq (2^L \bmod b)$ is satisfied, then the $a_i$ are independent and uniformly random integers in the ranges $0 \leq a_i < b_i$.
\end{theorem}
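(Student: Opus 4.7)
The plan is to reduce the theorem to Lemire's method (\autoref{lem:lemire}) applied with bound $b = \prod b_i$, combined with the mixed-radix independence result (\autoref{lem:independence}). The bridge between the two is a single algebraic identity relating $r_0$, the digits $a_i$, and the final remainder $r_k$.

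First I would unwind the recursion. Each step of the procedure gives $b_i r_{i-1} = 2^L a_i + r_i$ with $0 \le r_i < 2^L$. Multiplying this identity by $\prod_{j>i} b_j$ and using the definition of the next step, one shows by induction on $i$ that
\begin{align*}
(b_1 b_2 \cdots b_i)\, r_0 \;=\; 2^L \sum_{\ell=1}^{i} a_\ell \prod_{j>\ell}^{i} b_j \;+\; r_i.
\end{align*}
Taking $i = k$ yields $b\, r_0 = 2^L A + r_k$, where
\begin{align*}
A \;=\; \sum_{i=1}^{k} a_i \prod_{j>i}^{k} b_j
\end{align*}
is precisely the integer whose mixed-radix representation in base $\tuple{b}{k}$ has digits $\tuple{a}{k}$. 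Since $0 \le r_k < 2^L$, this identity is exactly the statement that the full-width product $b \otimes r_0$ equals the pair $(A, r_k)$.

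Second, I would observe that our acceptance condition $r_k \ge (2^L \bmod b)$ is therefore literally Lemire's criterion applied to $r_0$ with bound $b$. By \autoref{lem:lemire}, conditioned on acceptance, $A$ is a uniformly random integer in $[0, b)$. Then \autoref{lem:independence} applied to $A$ in base $\tuple{b}{k}$ gives that its mixed-radix digits $\tuple{a}{k}$ are independent and uniformly distributed in the ranges $[0, b_i)$, which is the desired conclusion. The ranges $0 \le a_i < b_i$ themselves need a brief side check, which follows directly from $a_i = \lfloor b_i r_{i-1} / 2^L \rfloor$ together with $r_{i-1} < 2^L$.

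The main obstacle is the telescoping identity $b\, r_0 = 2^L A + r_k$; everything else is just invoking the two earlier lemmas. Writing the induction cleanly is the one step that requires care, because the indices on the inner product $\prod_{j>\ell}^{i} b_j$ shift as $i$ grows, and the argument needs to make visible that the successive multiplications by $b_i$ are exactly what builds up the mixed-radix expansion of $\lfloor b r_0 / 2^L \rfloor$.
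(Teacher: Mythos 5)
Your proposal is correct and follows essentially the same route as the paper's own proof: establish the telescoping identity $b\,r_0 = 2^L A + r_k$ by induction (the paper phrases the partial sums recursively as $c_i = b_i c_{i-1} + a_i$ rather than with an explicit double sum, but it is the same quantity), then invoke Lemire's criterion to get uniformity of $A$ and the mixed-radix independence lemma to conclude. No gaps.
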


\begin{proof}
We will first show that when the conditions of the theorem are met, if the resulting values $\tuple{a}{k}$ are interpreted as the digits of a mixed-radix number $a$ in base $\tuple{b}{k}$, then $a$ is a uniformly random integer in $[0, b)$. Once this is established, we will invoke \autoref{lem:independence}.

For each $i$ from 1 to $k$, the full-width product $b_i \otimes r_{i-1} = (a_i, r_i)$ means $b_i r_{i-1} = 2^L a_i + r_i$. Since both $r_{i-1}$ and $r_i$ are in $[0, 2^L)$, this implies $0 \leq a_i < b_i$. So each $a_i$ is a valid mixed-radix digit for base $b_i$, and $a$ is well-defined.
Let $c_i$ be the value obtained by truncating $a$ to its first $i$ digits. In other words, $c_i$ is the value represented by the mixed-radix number $\tuple{a}{i}$ in base $\tuple{b}{i}$. Thus $c_1 = a_1$, and $c_i = b_i c_{i-1} + a_i$ for $1 < i \leq k$.

Let $b_1 b_2 \cdots b_i$ be the product of $b_1$ through $b_i$. We claim that $(b_1 b_2 \cdots b_i) \otimes r_0 = (c_i, r_i)$, and we prove it by finite induction on $i$. As a base case, the claim is true for $i = 1$ because $b_1 \otimes r_0 = (a_1, r_1) = (c_1, r_1)$. The claim is equivalent to $r_0 (b_1 b_2 \cdots b_i) = 2^L c_i + r_i$, and when $1 < i \leq k$ we use the inductive hypothesis that this equation holds for $i{-}1$ in order to prove it for $i$:
\begin{align*}
r_0(b_1 b_2 \cdots b_i) &= r_0(b_1 b_2 \cdots b_{i-1}) b_i\\
&= (2^L c_{i-1} + r_{i-1}) b_i
&& \text{(inductive hypothesis)}
\\
&= 2^L b_i c_{i-1} +  b_i r_{i-1}
\\
&= 2^L b_i c_{i-1} +  2^L a_i + r_i
&& \text{(definition of $a_i$ and $r_i$)}
\\
&= 2^L(b_i c_{i-1} + a_i) + r_i
\\
&= 2^L c_i + r_i
&& \text{(formula for $c_i$)}
\end{align*}
This completes the induction and proves the claim for each $i$ from 1 to $k$. We know that $c_k = a$ and $\prod b_i = b$, hence substituting $i \to k$ in the claim gives $b \otimes r_0 = (a, r_k)$. This is the full-width product of $b$ with a random word, and $b \leq 2^L$, so we can apply \hyperref[sec:existing_algorithms]{Lemire’s criterion}~\cite{lemire2019}: if $r_0$ is an $L$-bit random word such that  $r_k \geq (2^L \bmod b)$, then $a$ is a uniformly random integer in $[0, b)$.

By \autoref{lem:independence}, since $a$ is uniformly random in $[0, b)$, its mixed-radix digits in base $\tuple{b}{k}$ are independent and uniformly random in the ranges $[0, b_i)$. But those digits are $\tuple{a}{k}$, so the theorem is proved. Each $a_i$ produced this way is a uniformly random integer in $[0, b_i)$, and the $a_i$ are independent, provided that $r_k \geq (2^L \bmod b)$ and $r_0$ was picked uniformly at random.
\end{proof}

\subsection{Worked example}

\autoref{thm:main_theorem} provides an efficient method to roll a batch of dice, however its description is rather abstract. To develop some familiarity with the process, let us consider a small example using $L = 4$-bit integers. Suppose we want to flip a coin (which is equivalent to rolling a 2-sided die) and also roll a standard 6-sided die. Then $b_1 = 2$, $b_2 = 6$, and $b = 2 \cdot 6 = 12$. Since $12 \leq 16 = 2^4$, this batch of dice can indeed be rolled using 4-bit random numbers. There are $16$ possible values for $r_0$, and we list all of them in \autoref{tab:worked_example}. The columns of the table show the steps in our method.

The first step is to multiply $r_0$ by $b_1$, which yields $2 r_0$ in this example. Then $a_1$ is the number of times 16 goes into that value, and $r_1$ is the remainder. The next step is to multiply $r_1$ by $b_2$, which yields $6 r_1$, and find $a_2$ and $r_2$ in the same way. Finally, Lemire's criterion tells us that if $r_2 < (16 \bmod 12) = 4$, then the batch must be rejected and rolled again. We indicate this with an X in the right-most column of the table.

\begin{table}[htpb]
\caption{Rolling a 2-sided die and a 6-sided die with 4 bits}
\label{tab:worked_example}
\centering
\begin{tabular}{rrrrrrrc}
$r_0$ & $2 r_0$ & $a_1$ & $r_1$ &
$6 r_1$ & $a_2$ & $r_2$ & $r_2 < 4$ \\
\midrule
 0 &  0 & 0 &  0 &  0 & 0 &  0 & X \\
 1 &  2 & 0 &  2 & 12 & 0 & 12 &   \\
 2 &  4 & 0 &  4 & 24 & 1 &  8 &   \\
 3 &  6 & 0 &  6 & 36 & 2 &  4 &   \\
 4 &  8 & 0 &  8 & 48 & 3 &  0 & X \\
 5 & 10 & 0 & 10 & 60 & 3 & 12 &   \\
 6 & 12 & 0 & 12 & 72 & 4 &  8 &   \\
 7 & 14 & 0 & 14 & 84 & 5 &  4 &   \\
 8 & 16 & 1 &  0 &  0 & 0 &  0 & X \\
 9 & 18 & 1 &  2 & 12 & 0 & 12 &   \\
10 & 20 & 1 &  4 & 24 & 1 &  8 &   \\
11 & 22 & 1 &  6 & 36 & 2 &  4 &   \\
12 & 24 & 1 &  8 & 48 & 3 &  0 & X \\
13 & 26 & 1 & 10 & 60 & 3 & 12 &   \\
14 & 28 & 1 & 12 & 72 & 4 &  8 &   \\
15 & 30 & 1 & 14 & 84 & 5 &  4 &
\end{tabular}
\end{table}

The rows with an X in the last column will be rejected, so let us look at the other rows, which produce $a_1$ and $a_2$ as output. We see there are 12 such rows, and we know they are all equally likely because $r_0$ is chosen uniformly at random. There is one such row for each possible pair of values $(a_1, a_2)$ with $a_1$ in \{0, 1\} and $a_2$ in \{0, 1, 2, 3, 4, 5\}, so we conclude that in this example the coin flip and die roll are indeed fair and independent. \autoref{thm:main_theorem} proves that this will always be the case.

\section{Implementation}
\label{sec:implementation}

In some applications the values of $b_i$ are known ahead of time, possibly even at compile time. In that case the value of $t = (2^L \bmod b)$ can be precomputed, and \autoref{thm:main_theorem} can be implemented succinctly as shown in \autoref{alg:known_threshold}. In other applications the values of $b_i$ are not known ahead of time. In that case the threshold $t$ must be computed when needed, which involves a division operation. It can be avoided when $r_k \geq b$, as shown in \autoref{alg:unspecialized}.

\begin{algorithm}
\caption{\centering\strut---
Batched dice rolls (known threshold)}
\label{alg:known_threshold}
\begin{algorithmic}[1]

\Require Source of uniformly random integers in $[0, 2^L)$
\Require Target intervals $[0, b_i)$ for $i$ in $1 \ldots k$, with $1 \leq \prod b_i \leq 2^L$
\Require The value $t = (2^L \bmod \prod b_i)$
\Ensure The $a_i$ are independent and uniformly random in $[0, b_i)$
\smallskip

\Repeat
  \State $r \gets$ random integer in $[0, 2^L)$
  \For{$i$ in $1 \ldots k$}
    \State $(a_i, r) \gets b_i \otimes r$
    \Comment{Full-width multiply}
  \EndFor
\Until{$r \geq t$}

\State \Return $\tuple{a}{k}$

\smallskip
\end{algorithmic}
\end{algorithm}

\begin{algorithm}
\caption{\centering\strut---
Batched dice rolls (unknown threshold)}
\label{alg:unspecialized}
\begin{algorithmic}[1]

\Require Source of uniformly random integers in $[0, 2^L)$
\Require Target intervals $[0, b_i)$ for $i$ in $1 \ldots k$, with $1 \leq \prod b_i \leq 2^L$
\Ensure The $a_i$ are independent and uniformly random in $[0, b_i)$
\smallskip

\State $r \gets$ random integer in $[0, 2^L)$
\For{$i$ in $1 \ldots k$}
  \State $(a_i, r) \gets b_i \otimes r$
  \Comment{Full-width multiply}
\EndFor
\State $b \gets \prod b_i$
\label{line:unspecialized_product}
\If{$r < b$}
  \State $t \gets (2^L \bmod b)$
  \While{$r < t$}
    \State $r \gets$ random integer in $[0, 2^L)$
    \For{$i$ in $1 \ldots k$}
      \State $(a_i, r) \gets b_i \otimes r$
    \EndFor
  \EndWhile
\EndIf
\label{line:unspecialized_endif}

\State \Return $\tuple{a}{k}$

\smallskip
\end{algorithmic}
\end{algorithm}

We must have $b \leq 2^L$ in order to use \autoref{thm:main_theorem}, and for \autoref{alg:unspecialized} we would prefer to have $b$ at least an order of magnitude smaller than $2^L$. If $b$ is too close to $2^L$ then there is a high probability of taking the slow path that needs to calculate $t$, and possibly having to reroll the whole batch of dice. In many applications it is possible to bound the $b_i$ in such a way that a value $u$ satisfying $b \leq u \ll 2^L$ is known ahead of time. This allows for a faster implementation that avoids computing $b$ most of the time, by enclosing lines~\ref*{line:unspecialized_product}--\ref*{line:unspecialized_endif} of \autoref{alg:unspecialized} within an ``if $r < u$'' block.

Whichever version of the algorithm is used, at its core is a loop containing a single full-width multiplication ``$b_i \otimes r$''. The value of $b_i$ is known, but each pass through the loop computes the value of $r$ that will be used for the next iteration. This constitutes a loop-carried dependency, and it means that each iteration must complete before the next can begin.
Modern commodity processors generally have the ability to carry out more than one operation at a time, a feature called instruction-level parallelism or superscalarity~\cite{shen2013modern}. Thus we expect that the computation of  \autoref{alg:unspecialized} can be executed while other operations are completed (e.g. memory loads and stores), or the processor might speculatively use generated values of $a_i$ for upcoming operations. More than one batch of dice rolls may also be interleaved.

\section{Shuffling arrays}
\label{sec:shuffling_arrays}

The Fisher-Yates shuffle of \autoref{alg:knuthshuffle} is widely used for permuting the elements of an array. It requires $n{-}1$ dice rolls to shuffle $n$ elements. With a traditional implementation this involves $n{-}1$ calls to a random number generator, but by rolling dice in batches of $k$ we can reduce that by a factor of $k$. As written, \autoref{alg:unspecialized} requires the computation of the product $b = \prod b_i$. In a batched shuffle, that is $b = n^{\underline{k}}$. One key insight is that we can replace an exact computation by an upper bound $u \geq b$. As long as $r_k \geq u$, there is no need to compute $b$ exactly.

We denote by $n_k$ the largest array length $n$ at which we will use batches of $k$ dice. We want an upper bound $u \geq n^{\underline{k}}$, ideally with $u \ll 2^L$ so the fast path succeeds with high probability. One possible choice is $u_k = {n_k}^{\underline{k}}$, which is the largest product we will ever see for a batch of $k$ dice. We could use $u_k$ as the upper bound for all batches of size $k$, however to improve efficiency we would like to lower the value of $u$ as $n$ decreases. A convenient approach is, whenever the current upper bound fails and we need to compute the true product, we assign that product to $u$ and use it for subsequent batches of size $k$ in the shuffle.

We illustrate this approach in \autoref{alg:partial_shuffle}, which carries out the dice rolls and swaps for a single batch of size $k$, when there are $n$ elements to shuffle. It takes an upper bound $u$ as input, and at the end returns an upper bound for the next iteration. Usually the return value  equals the input, however if the batch needed to calculate its true product then the return value  equals that product.

\begin{algorithm}
\caption{\centering\strut---
Batched partial shuffle}
\label{alg:partial_shuffle}
\begin{algorithmic}[1]

\Require Source of uniformly random integers in $[0, 2^L)$
\Require Array $z$ whose first $n$ elements need to be shuffled
\Require Batch size $k \leq n$ for which $n^{\underline k} \leq 2^L$
\Require Upper bound $u \geq n^{\underline k}$
\Ensure Only the first $(n - k)$ elements of $z$ remain to be shuffled
\smallskip

\State $r \gets$ random integer in $[0, 2^L)$
\For{$i$ in $1 \ldots k$}
  \State $(a_i, r) \gets (n+1-i) \otimes r$
  \Comment{Full-width multiply}
\EndFor
\If{$r < u$}
  \State $u \gets n^{\underline k}$
  \Comment{Falling factorial}
  \State $t \gets (2^L \bmod u)$
  \label{line:start_extra_if}
  \While{$r < t$}
    \State $r \gets$ random integer in $[0, 2^L)$
    \For{$i$ in $1 \ldots k$}
      \State $(a_i, r) \gets (n+1-i) \otimes r$
    \EndFor
  \EndWhile
  \label{line:end_extra_if}
\EndIf
\For{$i$ in $1 \ldots k$}
  \State exchange $z[a_i]$ and $z[n{-}i]$
  \Comment{Zero-based indexing}
\EndFor
\State \Return $u$
\Comment{For the next batch}

\smallskip
\end{algorithmic}
\end{algorithm}

In \autoref{alg:partial_shuffle}, it would be possible to wrap lines~\ref*{line:start_extra_if}--\ref*{line:end_extra_if} in another ``if $r < u$'' block, thus avoiding the computation of $t$ some fraction of the time. We have omitted this for simplicity. We also note that in practice it is rare to need to update $u$ at all, so there is little to gain by further optimizing that scenario.

To shuffle a full array, we first select the largest $k$ such that $n_k \geq n$, and set $u = u_k$. Then we shuffle in batches of $k$, updating $u$ along the way, until $n \leq n_{k+1}$. At that point we set $u = u_{k+1}$ and shuffle in batches of $k{+}1$, and so forth up to some predetermined maximum batch size. Finally, when the number of remaining elements becomes smaller than the previous batch size, we finish the shuffle with one last batch.

\subsection{Batch sizes}
\label{sec:batch_sizes}

The batch size $k$ presents a tradeoff. On one hand, we want to roll as many dice as we can with each random word. On the other hand, we want each batch to succeed on the first try with high probability so we do not have to reroll. These goals are in opposition, and we seek a balance between them. The question becomes, at what array length $n$ should we start rolling dice in batches of $k$. In other words, what values of $n_k$ should be used. This depends on the bit-width~$L$, and can only truly be answered through benchmarks on the target hardware. However, a preliminary analysis can help to identify the right ballpark.

We begin by estimating the computational cost for each die roll. This will enable us to compare the relative performance of different batch sizes $k$ at each array length $n$. If our analysis is successful, then we will be able to use that information to select values of $n_k$ which minimize the expected computational cost for a Fisher-Yates shuffle that uses our batched dice rolling technique of \autoref{alg:partial_shuffle}.

To estimate the cost for each die roll, let us assume that calling the random number generator is as fast as $c_{rng} = 2$ multiplications, and dividing is as slow as $c_{div} = 16$ multiplications. These are conservative estimates, since a fast random number generator reduces the benefit of batching, and a slow division operation increases the cost of rerolling. Using those parameters, and assuming that $u \approx n^{\underline{k}}$, we can find the cost per element for batches of size $k$ at each $n$.

Each batch has probability $p_1 = u / 2^L \approx n^{\underline{k}} / 2^L$ of failing the first threshold, which incurs the cost of computing both $b = n^{\underline{k}}$ and $(2^L \bmod b)$. We treat this as $k{-}1$ multiplications and $1$ division, for a cost of $(c_{div} + k - 1)p_1$. The probability of rerolling the batch is $p_2 = (2^L \bmod b) / 2^L$, so the expected number of rolls including the first is $1 / (1 - p_2)$. Every roll incurs the cost of one random number and $k$ multiplications, for an average of $(c_{rng} + k)/(1 - p_2)$. The batch produces $k$ dice rolls, so the expected cost per die roll is:
\begin{equation*}
c_{avg} = \frac{1}{k}\left(
\frac{c_{rng} + k}{1 - p_2} +
(c_{div} + k - 1)p_1
\right)
\end{equation*}

By choosing $n_k$ so this cost is cheaper with a batch of $k$ than $k{-}1$ for all $n$ up to $n_k$, but not $n_k + 1$, we obtain the values shown in \autoref{tab:batch_sizes}. These are likely to be overestimates for the optimal $n_k$ because we have omitted some of the cost (e.g. misprediction cost). As a rough attempt to adjust for this, we may reduce the values of $n_k$ from \autoref{tab:batch_sizes} to the next-lower power of 2. We use those reduced values in our experiments, as described in \autoref{sec:experiments}.

\begin{table}[htpb]
  \centering
  \caption{Estimated values for $n_k$}
  \label{tab:batch_sizes}
  \begin{tabular}{r|r|r}
    $k$ & $L = 64$ & $L = 32$ \\
    \hline
    2 & \num{1358187913} & \num{20724} \\
    3 &     \num{929104} &   \num{581} \\
    4 &      \num{26573} &   \num{109} \\
    5 &       \num{3225} & \\
    6 &        \num{815} & \\
    7 &        \num{305} & \\
    8 &        \num{146} &
  \end{tabular}
\end{table}

\section{Experiments}
\label{sec:experiments}

We have implemented \autoref{alg:partial_shuffle} in the C programming language, and used it for a batched Fisher-Yates shuffle. To help ensure that our results are reproducible, we make our source code freely available.\footnote{\url{https://github.com/lemire/batched_random}} We use three different random number generators:
\begin{itemize}
    \item Our fastest generator is a  linear congruential generator proposed by Lehmer~\cite{LehmerRNG1951}. It has good statistical properties~\cite{l1999tables}. Using a 128-bit integer seed as the state, we multiply it by the fixed 64-bit integer \texttt{0xda942042e4dd58b5}. The most-significant 64~bits of the resulting state are returned.
    
    \item Our second generator is a 64-bit version of O'Neill's PCG~\cite{o2014pcg}. This relies on a 128-bit parameter $m$ acting as  a multiplier (\texttt{0x2360ed051fc65da44385df649fccf645}). With each call, a 128-bit state variable $s$ is updated: $s \gets m s + c$ where $s$ and $c$ are initialized once. The 64-bit random value is generated from the 128-bit state with a bit rotate and an exclusive~or. We use O'Neill's own implementation adapted to our code base.
    
    \item Finally, we use ChaCha as a 64-bit cryptographically strong generator~\cite{bernstein2008chacha}. The chosen implementation\footnote{\url{https://github.com/nixberg/chacha-rng-c}} is written in conventional C, without advanced optimizations. 
\end{itemize}

We measure the speed of our batched shuffling algorithm across a range of array lengths, with a maximum batch size of either two or six. To get accurate measurements, we shuffle the same array until the total elapsed time is \SI{100}{\micro\second}. Each element of the array occupies 64~bits and the entire arrays fit in CPU cache. We compare the results with two other implementations. One is a standard Fisher-Yates shuffle without any batching, but using an efficient bounded-number algorithm~\cite{lemire2019}. The other is a \emph{naive} batched approach where two dice rolls are extracted from a single random number through division and remainder. Code is provided in~\autoref{sec:code_samples} for our method with batches of up to six, and for both reference methods.

We use two different computer architectures, Apple M2 and Intel Xeon Gold 6338, as shown in~\autoref{tab:test-cpus}. 
We verify that the CPU frequency is stable during the benchmarks: the Intel Xeon Gold 6338's effective frequency remains at \SI{3.2}{\giga\hertz} throughout, while the Apple~M2 remains within 10\% of its 
\SI{3.49}{\GHz} maximum frequency.

\begin{table}[htpb]
\caption{Systems}
\label{tab:test-cpus}
\centering
\begin{tabular}{ccc}
\toprule
Processor &  Intel Xeon Gold 6338 & Apple M2 \\
\midrule
Frequency & \SI{3.2}{\GHz}  & \SI{3.49}{\GHz} \\
Microarchitecture & Ice~Lake (x64, 2019) & Avalanche (aarch64, 2022) \\ 
Memory & DDR4 (3200\,MT/s) &    LPDDR5 (6400\,MT/s)\\
Compiler & LLVM 16  & Apple/LLVM  14  \\
Cache (LLC) &  \SI{48}{\mebi\byte} & \SI{16}{\mebi\byte}\\
\bottomrule
\end{tabular}
\end{table}

For each architecture and random number generator, we plot the resulting average time per element to shuffle an array using all four algorithms. See~\autoref{fig:Lehmer},~\autoref{fig:pcg} and~\autoref{fig:chacha}. Normalizing per element keeps the scale of the results stable, and makes the relative speeds easy to see. As part of our benchmark, we check that the minimal time is close to the average time (within 10\%).

In these graphs, ``shuffle'' is the standard unbatched Fisher-Yates shuffle, and ``naive shuffle\_2'' uses division to roll batches of two dice. The ``shuffle\_2'' and ``shuffle\_6'' methods use our multiplication-based technique to roll batches of at most two and six dice respectively. In particular, shuffle\_6 begins using batches of two when the number of elements to shuffle is at most $2^{30}$ (this is beyond what appears in the graphs), then switches to batches of three when $n \leq 2^{19}$, batches of four when $n \leq 2^{14}$, batches of five when $n \leq 2^{11}$, and batches of six when $n \leq 2^9$.

\begin{figure}[htpb]
 \centering
 \begin{subfigure}[h]{0.49\textwidth}
 \includegraphics[width=0.99\textwidth]{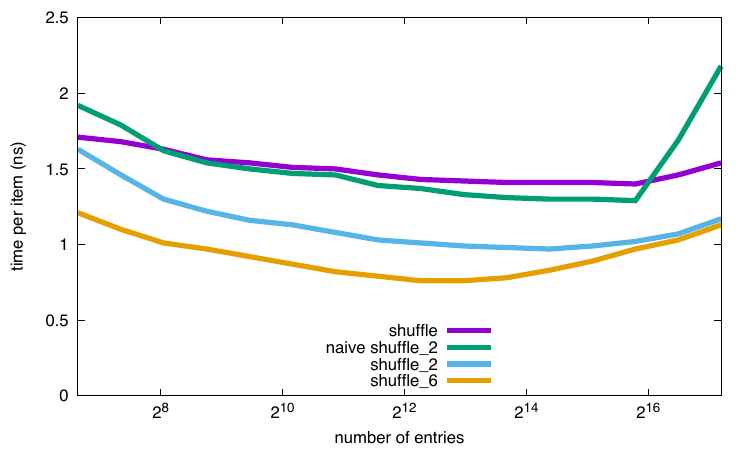}
\caption{Intel Xeon Gold 6338} \end{subfigure}
 \begin{subfigure}[h]{0.49\textwidth}
 \includegraphics[width=0.99\textwidth]{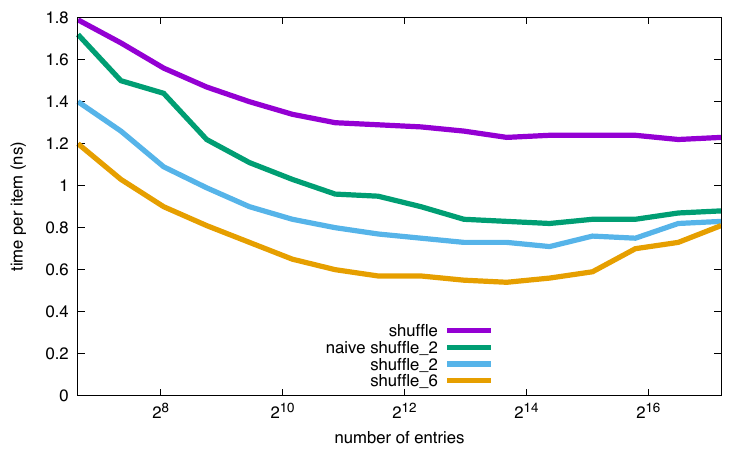}
\caption{Apple M2} \end{subfigure}
\caption{\label{fig:Lehmer}Shuffle timings with Lehmer random number generator}
\end{figure}

\begin{figure}[htpb]
 \centering
 \begin{subfigure}[h]{0.49\textwidth}
 \includegraphics[width=0.99\textwidth]{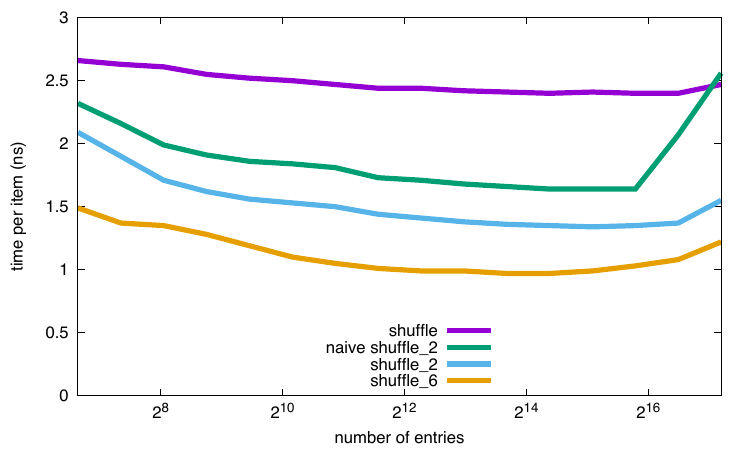}
\caption{Intel Xeon Gold 6338} \end{subfigure}
 \begin{subfigure}[h]{0.49\textwidth}
 \includegraphics[width=0.99\textwidth]{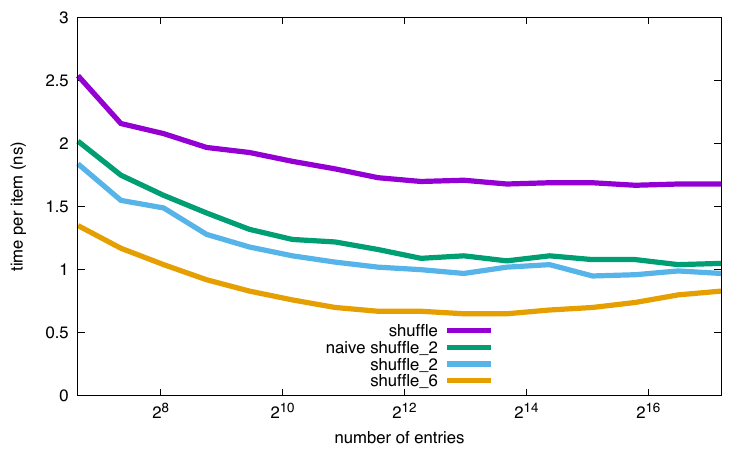}
\caption{Apple M2} \end{subfigure}
\caption{\label{fig:pcg}Shuffle timings with PCG random number generator}
\end{figure}

\begin{figure}[htpb]
 \centering
 \begin{subfigure}[h]{0.49\textwidth}
 \includegraphics[width=0.99\textwidth]{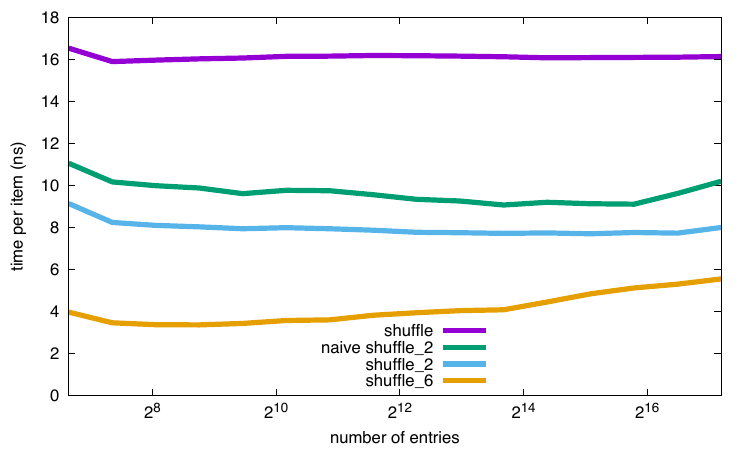}
\caption{Intel Xeon Gold 6338} \end{subfigure}
 \begin{subfigure}[h]{0.49\textwidth}
 \includegraphics[width=0.99\textwidth]{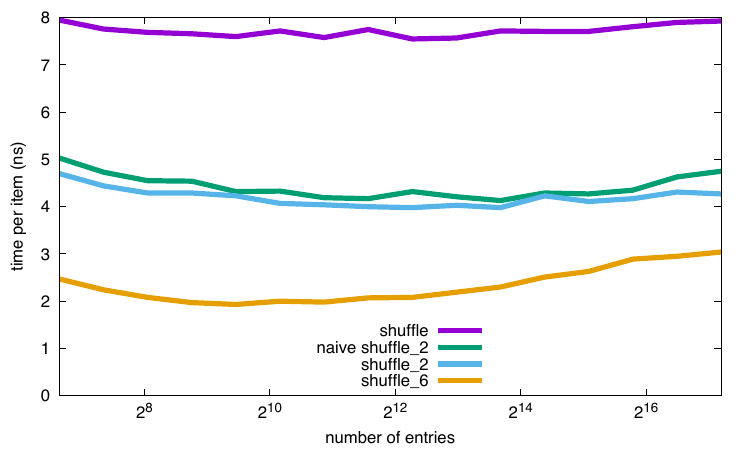}
\caption{Apple M2} \end{subfigure}
\caption{\label{fig:chacha}Shuffle timings with ChaCha random number generator}
\end{figure}

We see in the graphs that using our method for batched dice rolls results in a significantly faster shuffle than either reference method, with larger batches having greater benefits. Across most of the graphs, there is a clear trend: unbatched shuffles are the slowest, batches using division are somewhat faster, and batches using multiplication are the fastest. In one case, using the Lehmer generator on the Intel system, the division-based batched shuffle is no faster than the unbatched shuffle, and sometimes slower. This is explained by the relatively slow division instruction on Intel processors, and the high speed of the Lehmer generator.

In another case, using the ChaCha generator on the Apple system, the division-based method is nearly as fast as the multiplication-based shuffle\_2. This reflects the better performance of division instructions on Apple systems~\cite{applesilicon,fog2016instruction}, and the proportionally greater amount of time taken to generate random numbers with ChaCha, which dominates the time required for the shuffle.

Relative to the unbatched approach, our shuffle\_6 method increases the speed of shuffling by a significant margin across the entire range of sizes we tested, on both platforms, with all three random number generators. This is shown in~\autoref{fig:ratio}, where the y-axis indicates the relative speed of shuffle\_6 compared to the unbatched shuffle.

\begin{figure}[htpb]
 \centering
 \begin{subfigure}[h]{0.49\textwidth}
 \includegraphics[width=0.99\textwidth]{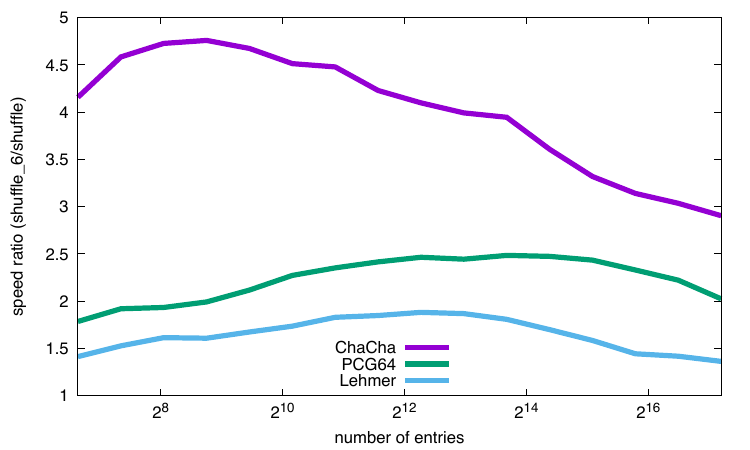}
\caption{Intel Xeon Gold 6338} \end{subfigure}
 \begin{subfigure}[h]{0.49\textwidth}
 \includegraphics[width=0.99\textwidth]{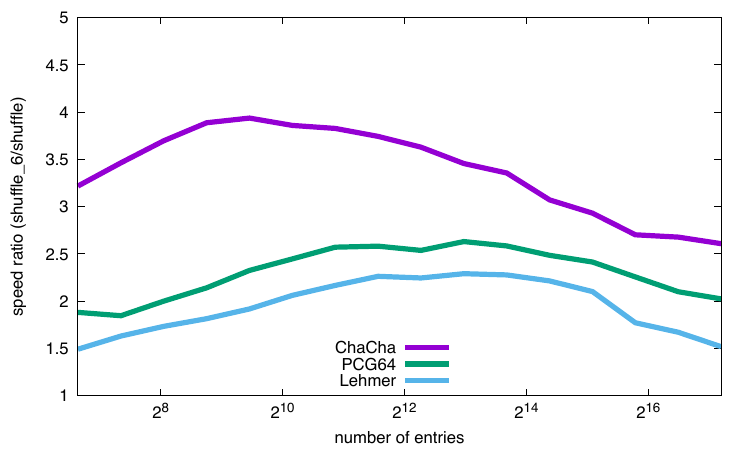}
\caption{Apple M2} \end{subfigure}
\caption{\label{fig:ratio}Speed ratios between shuffle\_6 and a conventional unbatched shuffle}
\end{figure}

On the Intel system, the speedups range from 1.4--1.8$\times$ with the Lehmer generator, 1.8--2.5$\times$ with the PCG generator, and 2.9--4.7$\times$ with ChaCha. On the Apple system, the speedups range from 1.5--2.3$\times$ with the Lehmer generator, 1.9--2.6$\times$ with the PCG generator, and 2.6--3.9$\times$ with ChaCha. Overall on both systems, with a fast generator (e.g. Lehmer or PCG) the speedups are generally 1.5--2.5$\times$, and with a slow generator (e.g. ChaCha) they are 2.5--4.5$\times$. Notably, with the ChaCha random number generator the speedups approach the theoretical ideal of 2$\times$ for batches of two, 3$\times$ for batches of three, and 4$\times$ for batches of four. The faster generators yield proportionally smaller speedups.

All of the graphs in~\autoref{fig:ratio} exhibit a peak somewhere in the middle, with lower values to either side. The decrease at large array sizes is a result of using smaller batches. The decrease at small array sizes is likely due in part to fixed costs being amortized across fewer elements, and it may also represent an opportunity for additional optimization.

\subsection{Instruction counts}

To better understand these results, we use performance counters to record the number of instructions retired by our functions during the shuffling.\footnote{Modern processors execute more instructions than the program flow strictly requires, a process known as \emph{speculative execution}. Among the speculatively executed instructions, only those necessary for the actual program execution flow are considered \emph{retired}.}
The number of instructions does not, by itself, determine the performance because a variety of factors affect the number of instructions that can be executed in a given unit of time. In particular, cache and memory latency limit our maximal speed. Nevertheless, we find that our batched procedures (shuffle\_2 and shuffle\_6) use significantly fewer instructions per element than a conventional unbatched shuffle.
Furthermore, the number of instructions retired per cycle
is maintained, or even increased.
See~\autoref{tab:instructions} where we consider the case when
there are \num{16384}~elements in the array to be shuffled. We also present the results in \autoref{fig:speed}.

\begin{table}[htpb]
\caption{Instructions retired per element and instructions per cycle (\num{16384}~elements)}
\label{tab:instructions}
\centering
\begin{tabular}{llcccc}
\toprule
\multirow{2}{*}{generator} & \multirow{2}{*}{function} & \multicolumn{2}{c}{Intel Xeon Gold 6338} & \multicolumn{2}{c}{Apple~M2}\\
           &          &    ins./el. & ins./cycle &  ins./el. & ins./cycle \\
\midrule
Lehmer
 & shuffle         & 18  & 3.8  &  18  & 4.2  \\
  & naive shuffle\_2 & 14  & 4.4  & 13   & 4.6  \\
 & shuffle\_2      & 13  & 4.6  &  14  & 5.5  \\
 & shuffle\_6      & 10 &  4.0 &  11 & 5.8  \\
PCG64
 & shuffle        & 26 &   3.4  &  24 & 4.1 \\
 & naive shuffle\_2 & 20 & 3.9&  16 &  4.5 \\
 & shuffle\_2      & 19 & 4.6 &  17 & 5.3 \\
 & shuffle\_6      & 12 &  4.0 &  13 & 5.6 \\
ChaCha
 & shuffle       & 139 & 2.8 & 133  &   5.3 \\
 & naive shuffle\_2 & 76 &  3.1 & 71 &   5.1 \\
 & shuffle\_2      & 75 &  3.1 & 72 &   5.3 \\
 & shuffle\_6     &  39 &   3.1&  39 &  5.2 \\
\bottomrule
\end{tabular}
\end{table}

\begin{figure}[htpb]
\centering
 \begin{subfigure}[h]{0.49\textwidth}
 \includegraphics[width=0.99\textwidth]{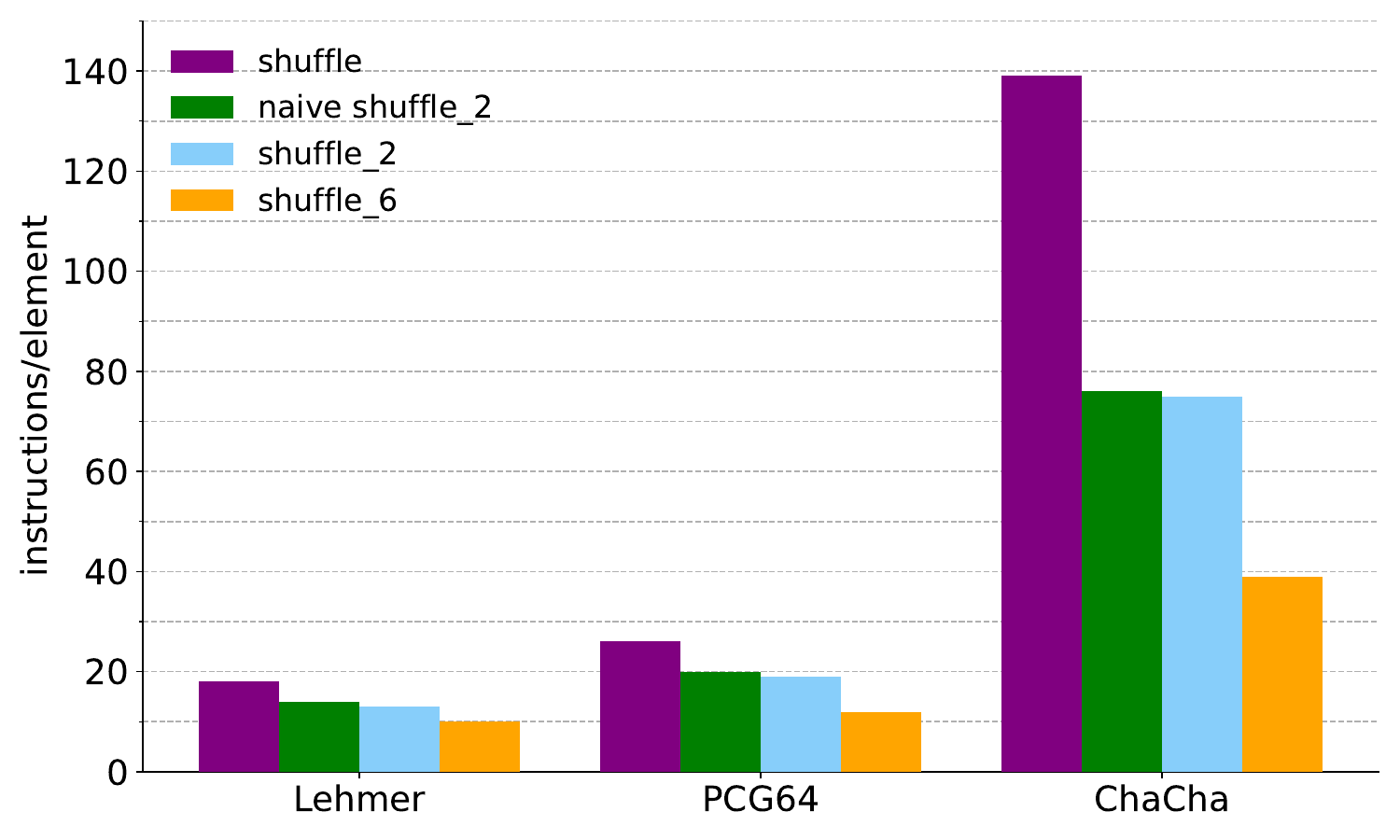}
\caption{Intel Xeon Gold 6338} \end{subfigure}
 \begin{subfigure}[h]{0.49\textwidth}
 \includegraphics[width=0.99\textwidth]{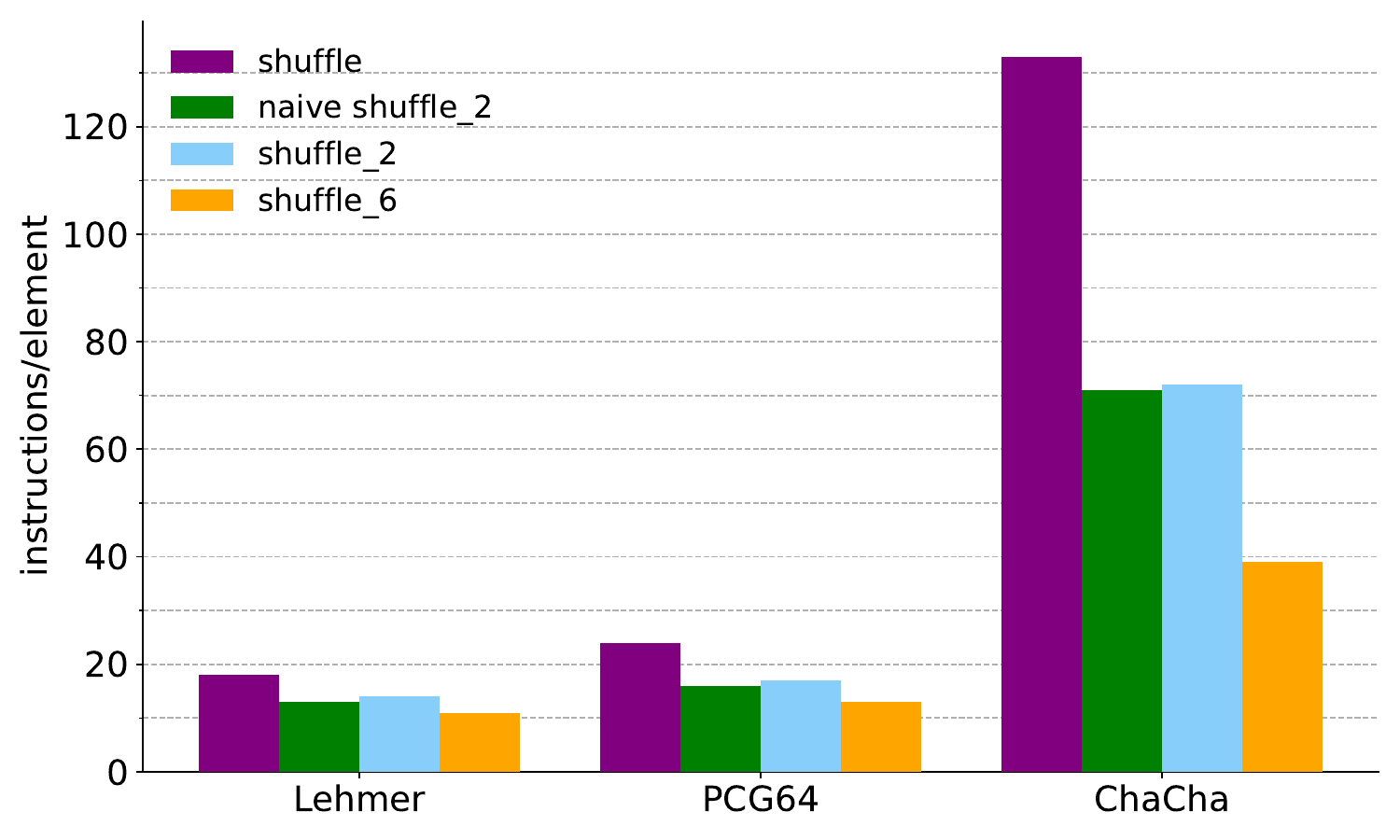}
\caption{Apple M2} \end{subfigure}
\caption{Instructions retired per element for arrays of \num{16384}~64-bit elements\label{fig:speed}}
\end{figure}

The reduction in the number of instructions between a conventional shuffle function and our most aggressively batched one (shuffle\_6) is about 40\% in the case of the Lehmer generator, and rises up to a threefold reduction when using the ChaCha generator. This suggests that the batched shuffles run at a higher speed in large part because they use far fewer instructions. In turn, they use fewer instructions because fewer calls to the random number generator are required.

The division-based approach (naive shuffle\_2) uses about as many instructions as our shuffle\_2 function. However, shuffle\_2 tends to retire more instructions per cycle. This aligns with the fact that multiplication is generally faster than division. It might be possible to further tune the performance of the naive approach, however we expect that division would remain a bottleneck.

\section{Conclusion}
\label{sec:conclusion}

We have shown that Lemire’s nearly-divisionless method of generating bounded random integers can be extended to generate multiple such numbers from a single random word. When rolling several dice or shuffling an array, this batched approach can reduce the number of random bits used, without increasing the amount of computation required. In our benchmark tests we saw speedups by a factor of 1.5--2.5$\times$ when shuffling an array using a fast random number generator (e.g. Lehmer~\cite{l1999tables} or PCG~\cite{o2014pcg}), and by even greater factors when the random bits are more computationally expensive.


Our results are based on a system-oblivious computational model described in  \autoref{sec:shuffling_arrays}. We expect that further tuning, especially system-specific tuning, might have some additional benefits. Our implementation is in the C language using a popular compiler (LLVM). We expect that our results should carry over to other languages such as C++, Rust, Go, Swift, Java, C\# and so forth with relative ease. However, care might be needed to ensure that the generated compiled code is comparable to the result of our C~code under LLVM. We expect that our approach can find broad applications.
Future work should examine other applications such as sampling algorithms and simulations, as well as the shuffling of very large and very small arrays.

\clearpage
\appendix
\section{Code samples}
\label{sec:code_samples}

Unbatched shuffle:
\lstinputlisting[style=customcpp]{code/conventional.c}
Batched shuffle:
\lstinputlisting[style=customcpp]{code/batched.c}
Naive batched shuffle:
\lstinputlisting[style=customcpp]{code/naive.c}

\clearpage
\bibliography{references}

\begin{thebibliography}{35}
\providecommand{\natexlab}[1]{#1}
\providecommand{\url}[1]{\texttt{#1}}
\providecommand{\urlprefix}{}

\bibitem[{Matsumoto and Nishimura(1998)Matsumoto, Makoto and Nishimura, Takuji}]{Matsumoto:1998:MTE:272991.272995}
Matsumoto M, Nishimura T.
\newblock Mersenne Twister: A 623-dimensionally Equidistributed Uniform Pseudo-random Number Generator.
\newblock ACM Trans Model Comput Simul 1998 Jan;8(1):3--30.

\bibitem[{L'Ecuyer(1999)L'Ecuyer, Pierre}]{l1999tables}
L'Ecuyer P.
\newblock Tables of linear congruential generators of different sizes and good lattice structure.
\newblock Mathematics of Computation 1999;68(225):249--260.

\bibitem[{L'Ecuyer et~al.(1993)L'Ecuyer, Pierre and Blouin, Fran\c{c}ois and Couture, Raymond}]{LEcuyer:1993:SGM:169702.169698}
L'Ecuyer P, Blouin F, Couture R.
\newblock A Search for Good Multiple Recursive Random Number Generators.
\newblock ACM Trans Model Comput Simul 1993 Apr;3(2):87--98.

\bibitem[{De~Matteis and Pagnutti(1988)De Matteis, A and Pagnutti, Simonetta}]{de1988parallelization}
De~Matteis A, Pagnutti S.
\newblock Parallelization of random number generators and long-range correlations.
\newblock Numerische Mathematik 1988;53(5):595--608.

\bibitem[{Fishman(2013)Fishman, George}]{fishman2013monte}
Fishman G.
\newblock Monte Carlo: concepts, algorithms, and applications.
\newblock corrected edition ed. Berlin: Springer Science \& Business Media; 2013.

\bibitem[{L'Ecuyer(1990)L'Ecuyer, Pierre}]{LEcuyer:1990:RNS:84537.84555}
L'Ecuyer P.
\newblock Random Numbers for Simulation.
\newblock Commun ACM 1990 Oct;33(10):85--97.

\bibitem[{O'Neill(2014)Melissa E. O'Neill}]{o2014pcg}
O'Neill ME, {PCG}: A Family of Simple Fast Space-Efficient Statistically Good Algorithms for Random Number Generation; 2014.
\newblock \url{https://www.pcg-random.org/pdf/hmc-cs-2014-0905.pdf} [last checked August 2024].

\bibitem[{L'Ecuyer(2015)L'Ecuyer, Pierre}]{l2017random}
L'Ecuyer P.
\newblock History of uniform random number generation.
\newblock In: Proceedings of the 2017 Winter Simulation Conference New York, NY, USA: IEEE Press; 2015. p. 1--17.

\bibitem[{L’Ecuyer(2012)L’Ecuyer, Pierre}]{l2012random}
L’Ecuyer P.
\newblock Random number generation.
\newblock In: Handbook of Computational Statistics Berlin: Springer; 2012.p. 35--71.

\bibitem[{Sleem and Couturier(2020)Sleem, Lama and Couturier, Rapha{\"e}l}]{sleem2020testu01}
Sleem L, Couturier R.
\newblock TestU01 and Practrand: Tools for a randomness evaluation for famous multimedia ciphers.
\newblock Multimedia Tools and Applications 2020;79(33):24075--24088.

\bibitem[{Vitter(1985)Vitter, Jeffrey S.}]{Vitter:1985:RSR:3147.3165}
Vitter JS.
\newblock Random Sampling with a Reservoir.
\newblock ACM Trans Math Softw 1985 Mar;11(1):37--57.

\bibitem[{Knuth(1969)Donald E. Knuth}]{Knuth1969}
Knuth DE.
\newblock Seminumerical Algorithms, vol.~2 of The Art of Computer Programming.
\newblock Boston: Addison-Wesley; 1969.

\bibitem[{Durstenfeld(1964)Durstenfeld, Richard}]{durstenfeld1964algorithm}
Durstenfeld R.
\newblock Algorithm 235: random permutation.
\newblock Communications of the ACM 1964;7(7):420.

\bibitem[{Devroye(1997)Devroye, Luc}]{Devroye:1997:RVG:268403.268413}
Devroye L.
\newblock Random Variate Generation for Multivariate Unimodal Densities.
\newblock ACM Trans Model Comput Simul 1997 Oct;7(4):447--477.

\bibitem[{Calvin and Nakayama(1998)Calvin, James M. and Nakayama, Marvin K.}]{Calvin:1998:UPR:280265.280273}
Calvin JM, Nakayama MK.
\newblock Using Permutations in Regenerative Simulations to Reduce Variance.
\newblock ACM Trans Model Comput Simul 1998 Apr;8(2):153--193.

\bibitem[{Owen(1998)Owen, Art B.}]{Owen:1998:LSS:272991.273010}
Owen AB.
\newblock Latin Supercube Sampling for Very High-dimensional Simulations.
\newblock ACM Trans Model Comput Simul 1998 Jan;8(1):71--102.

\bibitem[{Osogami(2009)Osogami, Takayuki}]{Osogami:2009:FPB:1540530.1540533}
Osogami T.
\newblock Finding Probably Best Systems Quickly via Simulations.
\newblock ACM Trans Model Comput Simul 2009 Aug;19(3):12:1--12:19.

\bibitem[{Amrein and K\"{u}nsch(2011)Amrein, Michael and K\"{u}nsch, Hans R.}]{Amrein:2011:VIS:1899396.1899401}
Amrein M, K\"{u}nsch HR.
\newblock A Variant of Importance Splitting for Rare Event Estimation: Fixed Number of Successes.
\newblock ACM Trans Model Comput Simul 2011 Feb;21(2):13:1--13:20.

\bibitem[{Hernandez et~al.(2012)Hernandez, Alejandro S. and Lucas, Thomas W. and Carlyle, Matthew}]{Hernandez:2012:CNO:2379810.2379813}
Hernandez AS, Lucas TW, Carlyle M.
\newblock Constructing Nearly Orthogonal Latin Hypercubes for Any Nonsaturated Run-variable Combination.
\newblock ACM Trans Model Comput Simul 2012 Nov;22(4):20:1--20:17.

\bibitem[{Hinrichs et~al.(2013)Hinrichs, Chris and Ithapu, Vamsi K. and Sun, Qinyuan and Johnson, Sterling C. and Singh, Vikas}]{Hinrichs:2013:SUP:2999611.2999711}
Hinrichs C, Ithapu VK, Sun Q, Johnson SC, Singh V.
\newblock Speeding Up Permutation Testing in Neuroimaging.
\newblock In: Proceedings of the 26th International Conference on Neural Information Processing Systems NIPS'13, USA: Curran Associates Inc.; 2013. p. 890--898.

\bibitem[{Shterev et~al.(2010)Shterev, Ivo D. and Jung, Sin-Ho and George, Stephen L. and Owzar, Kouros}]{Shterev2010}
Shterev ID, Jung SH, George SL, Owzar K.
\newblock {permGPU: Using graphics processing units in RNA microarray association studies}.
\newblock BMC Bioinformatics 2010 Jun;11(1):329.
\newblock \urlprefix\url{https://doi.org/10.1186/1471-2105-11-329}.

\bibitem[{Langr et~al.(2014)Langr, Daniel and Tvrd\'{\i}k, Pavel and Dytrych, Tom\'{a}\v{s} and Draayer, Jerry P.}]{Langr:2014:A9P:2684421.2669372}
Langr D, Tvrd\'{\i}k P, Dytrych T, Draayer JP.
\newblock Algorithm 947: Paraperm---Parallel Generation of Random Permutations with MPI.
\newblock ACM Trans Math Softw 2014 Oct;41(1):5:1--5:26.

\bibitem[{Sanders(1998)Peter Sanders}]{SANDERS1998305}
Sanders P.
\newblock Random permutations on distributed, external and hierarchical memory.
\newblock Information Processing Letters 1998;67(6):305--309.

\bibitem[{Gustedt(2008)Gustedt, Jens}]{Gustedt2008}
Gustedt J.
\newblock In: Engineering Parallel In-Place Random Generation of Integer Permutations Berlin, Heidelberg: Springer Berlin Heidelberg; 2008. p. 129--141.

\bibitem[{Waechter et~al.(2012)Waechter, Michael and Hamacher, Kay and Hoffgaard, Franziska and Widmer, Sven and Goesele, Michael}]{Waechter2012}
Waechter M, Hamacher K, Hoffgaard F, Widmer S, Goesele M.
\newblock In: Is Your Permutation Algorithm Unbiased for $n\neq 2^m$? Berlin, Heidelberg: Springer Berlin Heidelberg; 2012. p. 297--306.

\bibitem[{Lemire(2019)Lemire, Daniel}]{lemire2019}
Lemire D.
\newblock Fast Random Integer Generation in an Interval.
\newblock ACM Transactions on Modeling and Computer Simulation 2019 Jan;29(1):1–12.
\newblock \urlprefix\url{http://dx.doi.org/10.1145/3230636}.

\bibitem[{Abel and Reineke(2019)Abel, Andreas and Reineke, Jan}]{abel2019uops}
Abel A, Reineke J.
\newblock {uops.info: Characterizing latency, throughput, and port usage of instructions on intel microarchitectures}.
\newblock In: Proceedings of the Twenty-Fourth International Conference on Architectural Support for Programming Languages and Operating Systems; 2019. p. 673--686.

\bibitem[{Von~Neumann(1951)Von Neumann, John}]{von1961various}
Von~Neumann J.
\newblock Various techniques used in connection with random digits.
\newblock National Bureau of Standards Series 1951;12:36--38.

\bibitem[{Laisant(1888)Charles-Ange Laisant}]{Laisant1888}
Laisant CA.
\newblock Sur la num\'eration factorielle, application aux permutations.
\newblock Bulletin de la Soci\'et\'e Math\'ematique de France 1888;16:176--183.

\bibitem[{Lehmer(1960)D. H. Lehmer}]{LehmerCode1960}
Lehmer DH.
\newblock Teaching combinatorial tricks to a computer.
\newblock In: Proceedings of the Tenth Symposium in Applied Mathematics of the American Mathematical Society USA: American Mathematical Society; 1960. p. 179--193.

\bibitem[{Shen and Lipasti(2013)Shen, John Paul and Lipasti, Mikko H}]{shen2013modern}
Shen JP, Lipasti MH.
\newblock Modern processor design: fundamentals of superscalar processors.
\newblock Waveland Press; 2013.

\bibitem[{Lehmer(1951)D. H. Lehmer}]{LehmerRNG1951}
Lehmer DH.
\newblock Mathematical models in large-scale computing units.
\newblock Annals of the Computation Laboratory (Harvard University) 1951;26:141--146.

\bibitem[{Bernstein(2008)Bernstein, Daniel J}]{bernstein2008chacha}
Bernstein DJ.
\newblock {ChaCha, a variant of Salsa20}.
\newblock In: Workshop record of SASC, vol.~8; 2008. p. 3--5.

\bibitem[{Johnson(2023)Dougall Johnson}]{applesilicon}
Johnson D, {Apple Microarchitecture Research}; 2023.
\newblock \url{https://dougallj.github.io/applecpu/firestorm.html} [last checked July 2024].

\bibitem[{Fog(2022)Fog, Agner}]{fog2016instruction}
Fog A.
\newblock Instruction tables: Lists of instruction latencies, throughputs and micro-operation breakdowns for {Intel, AMD} and {VIA CPUs}.
\newblock Copenhagen, Denmark: Copenhagen University College of Engineering; 2022.

\end{thebibliography}

\end{document}